\begin{document}

\mainmatter  

\title{Differentially Private Neighborhood-based Recommender Systems}

\author{Jun Wang$^1$
\and Qiang Tang$^2$ }
\institute{
$^1$University of Luxembourg\\
$^2$Luxembourg Institute of Science and Technology\\
\email{jun.wang@uni.lu}; 
\email{tonyrhul@gmail.com}
}

%
%

\maketitle

\begin{abstract}
Privacy issues of recommender systems have become a hot topic for the society as such systems are appearing in every corner of our life. In contrast to the fact that many secure multi-party computation protocols have been proposed to prevent information leakage in the process of recommendation computation, very little has been done to restrict the information leakage from the recommendation results. In this paper, we apply the differential privacy concept to neighborhood-based recommendation methods (NBMs) under a probabilistic framework. We first present a solution, by directly calibrating Laplace noise into the training process, to differential-privately find the maximum a posteriori parameters \emph{similarity}. Then we connect differential privacy to NBMs by exploiting a recent observation that sampling from the scaled posterior distribution of a Bayesian model results in provably differentially private systems. Our experiments show that both solutions allow promising accuracy with a modest privacy budget, and the second solution yields better accuracy if the sampling asymptotically converges. We also compare our solutions to the recent differentially private matrix factorization (MF) recommender systems, and show that our solutions achieve better accuracy when the privacy budget is reasonably small. This is an interesting result because MF systems often offer better accuracy when differential privacy is not applied.
\keywords{Recommender System; Collaborative Filtering; Differential Privacy}
\end{abstract}

\section{Introduction}
Recommender systems, particularly collaborative filtering (CF) systems, have been widely deployed due to the success of E-commerce \cite{su2009survey}. There are two dominant approaches in CF. One is matrix factorization (MF) \cite{koren2009matrix} which models the user preference matrix as a product of two low-rank user and item feature matrices, and the other is neighborhood-based method (NBM) which leverages the \emph{similarity} between items or users to estimate user preferences \cite{desrosiers2011comprehensive}. Generally, MF is more accurate than NBM \cite{su2009survey}, while NBM has an irreplaceable advantage that it naturally explains the recommendation results. In addition, recent research shows that MF falls short in session-based recommendation while NBMs allow promising accuracy \cite{hidasi2015session}. Therefore, NBM is still an interesting research topic for the community.

In reality, industrial CF recommender and ranking systems often adopt a client-server model, in which a single server (or, server cluster) holds databases and serves a large number of users. CF exploits the fact that similar users are likely to prefer similar products, unfortunately this property facilitates effective user de-anonymization and history information recovery through the recommendation results \cite{calandrino2011you,narayanan2008robust}. To this end, NBM is more fragile (e.g. \cite{calandrino2011you,mobasher2007toward}), since it is essentially a simple linear combination of user history information which is weighted by the normalized \emph{similarity} between users or items. In this paper, we aim at preventing information leakage from the recommendation results, for the NBM systems. Note that a related research topic is to avoid the server from accessing the users' plaintext inputs, and many solutions exist for this (e.g. \cite{nikolaenko2013privacy,tang2015privacy}). Combining them with our solution will result in a comprehensive solution, which prevent information leakage from both the computation process and final recommendation results. We skip the details here.

Differential privacy \cite{dwork2014algorithmic} provides rigorous privacy protection for user information in statistical databases. Intuitively, it offers a participant the possibility to deny his participation in a computation. Some works, such as \cite{mcsherry2009differentially,zhu2014effective}, have been proposed for some specific NBMs, which adopt correlations or artificially defined metrics as \emph{similarity} \cite{desrosiers2011comprehensive} and are less appealing from the perspective of accuracy. It remains as an open issue to apply the differential privacy concept to more sophisticated NBM models, which automatically learn \emph{similarity} from training data (e.g. \cite{rendle2009bpr,toscher2008improved,PNBM}). Particularly, probabilistic NBM \cite{PNBM} models the dependencies among observations (ratings) which leads user preference estimation to a penalized risk minimization problem to search optimal unobserved factors (In our context, the unobserved factor is \emph{similarity}). It has been shown that the instantiation in \cite{PNBM} outperforms most other NBM systems and even the MF or probabilistic MF systems in many settings.

\subsection{Our Contribution}

Due to its accuracy advantages, we focus on the probabilistic NBM systems in our study. Inspired by \cite{berlioz2015applying,liu2015fast}, we propose two methods to instantiate differentially private solutions.

First, we calibrate noise into the training process (i.e. SGD) to differential-privately find the maximum a posteriori \emph{similarity}. This instantiation achieves differential privacy for each rating value. Second, we link the differential privacy concept to probabilistic NBM, by sampling from scaled posterior distribution. For the sake of efficiency, we employ a recent MCMC method, namely Stochastic Gradient Langevin Dynamics (SGLD) \cite{welling2011bayesian}, as the sampler. In order to use SGLD, we derive an unbiased estimator of \emph{similarity} gradient from a mini-batch. This instantiation achieves differential privacy for every user profile (rating vector).

To evaluate our solutions, we carry out experiments to compare our solutions to the state-of-the-art differentially private MFs, and also to compare our solutions between themselves. Our results show that differentially private MFs are more accurate when privacy loss is large (extremely, in a non-private case), but differentially private NBMs are better when privacy loss is set in a more reasonable range. Even with the added noises, both our solutions consistently outperform non-private traditional NBMs in accuracy. Despite the complexity concern, our solution with posterior sampling (i.e. SGLD) outperforms the other from the accuracy perspective.

\subsection{Organization}

The rest of this paper is organized as follows. In Section \ref{sec:pre}, we recap the preliminary knowledge. In Section \ref{sec:pgln} and \ref{sec:dpps}, we present our two differentially private NBM solutions respectively. In Section \ref{sec:exp}, we present our experiment results. In Section \ref{sec:related}, we present the related work. In Section \ref{sec:con}, we conclude the paper.

\section{Preliminary}
\label{sec:pre}

Generally, NBMs can be divided into user-user approach (relies on \emph{similarity} between users) and item-item approach (relies on \emph{similarity} between items) \cite{desrosiers2011comprehensive}. Probabilistic NBM can be regarded as a generic methodology, to be employed by any other specific NBM system. Commonly, the item-item approach is more accurate and robust than the user-user approach \cite{desrosiers2011comprehensive,mobasher2007toward}. In this paper, we take the item-item approach as an instance to introduce the probabilistic NBM concept from \cite{PNBM}. We also review the concept of differential privacy.

\label{sec:pnbm-1}
\begin{table}[h!]
\centering
\caption{Notation}
\label{tab:notation}
\begin{tabular}{|l|l|}
\hline
$r_{ui}$ & the rating that user $u$ gave item $i$                                                                  \\ \hline
$s_{ij}$ & the similarity between item $i$ and $j$                                                                  \\ \hline
$R \in \mathbb{R}^{N\times M}$        & rating matrix                                                                  \\ \hline
$R^{>0} \subset R$        & all the observed ratings or training data \\ \hline
$S\in \mathbb{R}^{M\times M}$         & item similarity matrix \\ \hline
$S_i \in \mathbb{R}^{1\times M}$      & similarity vector of item $i$                                                  \\ \hline
$R_u^{-} \in \mathbb{R}^{M \times 1}$ & $u$'s rating vector without the item being modeled             \\ \hline
$\alpha_S,\alpha_R$                   & hyperparameters of $S_i$ and $r_{ui}$ respectively                     \\ \hline
$f(S_i,R_u^{-})$                      & any NBM which takes as input the $S_i$ and $R_u^{-}$         \\ \hline
$p(*)$                                & prior distribution of $*$                                                      \\ \hline
$p(S_i|\alpha_S)$                     & likelihood function of $S_i$ conditioned on $\alpha_S$                         \\ \hline
$p(r_{ui}|f(*),\alpha_R)$   & likelihood function of $r_{ui}$ \\ \hline
\end{tabular}
\end{table}

\begin{figure}[h!]
\centering
\includegraphics[height=1.6in, width=2.4in]{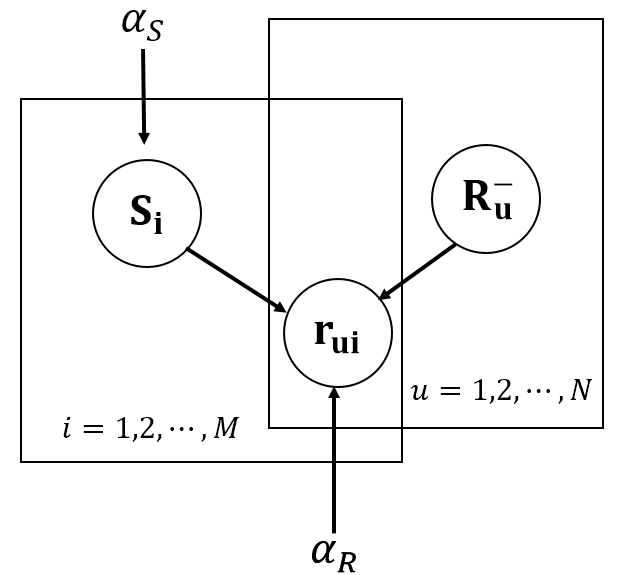}
\caption{Graphical model of PNBM}
\label{fig:gpnbm}
\end{figure}

\subsection{Review Probabilistic NBM}

Suppose we have a dataset with $N$ users and $M$ items. Probabilistic NBM \cite{PNBM} assumes the observed ratings $R^{>0}$ conditioned on historical ratings with Gaussian noise, see Fig. \ref{fig:gpnbm}. Some notation is summarized in Table \ref{tab:notation}. The likelihood function of observations $R^{>0}$ and prior of \emph{similarity} $S$ are written as
\begin{small}
\begin{equation}
\label{eq:cdd}
\begin{split}
p(R^{>0}|S, R^-, \alpha_R) = \prod_{i=1}^{M}\prod_{u=1}^{N}[\mathcal{N}(r_{ui}|f(S_i,R_u^{-}), \alpha_{R}^{-1}) ]^{I_{ui}};\ \ \ \ p(S|\alpha_{S})=\prod_{i=1}^{M} \mathcal{N}(S_{i}|0, \alpha_{S}^{-1}\mathbf{I})
\end{split}
\end{equation}
\end{small}
where $\mathcal{N}(x|\mu, \alpha^{-1})$ denotes the Gaussian distribution with mean $\mu$ and precision $\alpha$. $R^-$ indicates that if item $i$ is being modeled then it is excluded from the training data $R^{>0}$. $f(S_i,R_u^{-})$ denotes any NBM which takes as inputs the $S_i$ and $R_u^{-}$. In the following, we instantiate it to be a typical NBM \cite{desrosiers2011comprehensive}:
\begin{equation}\label{myeq:tnbm}
 \hat{r}_{ui} \leftarrow f(S_i,R_u^{-}) = \bar{r}_{i} + \frac{\sum_{j\in \mathcal{I} \backslash \{i\}}s_{ij}(r_{uj}-\bar{r}_j)I_{uj}}{\sum_{j \in \mathcal{I} \backslash \{i\}}|s_{ij}|I_{uj}}=\frac{S_iR_u^{-}}{|S_i|I_u^{-}} \quad
\end{equation}
$\hat{r}_{ui}$ denotes the estimation of user $u$'s preference on item $i$, $\bar{r}_i$ is item $i$'s mean rating value, $I_{uj}$ is the rating indicator  $I_{uj}=1$ if user $u$ rated item $j$, otherwise, $I_{uj}=0$. Similar with $R_u^- $, $I_u^-$ denotes user $u$'s indicator vector but set $I_{ui}=0$ if $i$ is the item being estimated. For the ease of notation, we will omit the term $\bar{r}_i$ and present Equation (\ref{myeq:tnbm}) in a vectorization form in favor of a slightly more succinct notation.

The log of the posterior distribution over the \emph{similarity} is
\begin{gather}
\label{myeq:logmpnbm2}
\begin{aligned}
- \log & p(S|R^{>0}, \alpha_S,\alpha_R) = -\log p(R^{>0}|S,R^-,\alpha_R)p(S|\alpha_S) =\\
& \frac{\alpha_R}{2} \sum_{i=1}^M\sum_{u=1}^N(r_{ui}-\frac{S_iR_u^-}{|S_i|I_u^-})^2+\frac{\alpha_s}{2}\sum_{i=1}^M(||S_i||_2)
+ M^2 \log \frac{\alpha_s}{\sqrt{2\pi}} + \log \frac{\alpha_R}{\sqrt{2\pi}} \sum_{i=1}^M\sum_{u=1}^NI_{ui}\end{aligned}\raisetag{3\baselineskip}
\end{gather}

Thanks to the simplicity of the log-posterior distribution (i.e. $\sum_{i=1}^M\sum_{u=1}^N(r_{ui}-\frac{S_iR_u^-}{|S_i|I_u^-})^2+\sum_{i=1}^M(||S_i||_2)$, where we omit the constant terms in Equation (\ref{myeq:logmpnbm2})). We can have two approaches to solve this risk minimization problem.

\begin{itemize}
\item \emph{Stochastic Gradient Descent (SGD).} In this approach,  $\log p(S|R^{>0}, \alpha_S,\alpha_R)$ is treated as an error function. SGD can be adopted to minimize the error function. In each SGD iteration we update the gradient of \emph{similarity} ($-\frac{\partial \log p(S|R^{>0}, \alpha_S,\alpha_R)}{\partial S_{ij}}$) with a set of randomly chosen ratings $\Phi$ by
\begin{equation}\label{myeq:tgradient}
S_{ij} \leftarrow S_{ij}-\eta ( \sum_{ (u , j) \in \Phi} (\hat{r}_{ui}-r_{ui}) \frac{\partial \hat{r}_{ui} }{\partial S_{ij}} + \lambda S_{ij})
\end{equation}
where $\eta$ is the learning rate, $\lambda= \frac{\alpha_S}{\alpha_R}$ is the regular parameter, the set $\Phi$ may contain $n \in [1, N]$ users.
In Section \ref{sec:pgln}, we will introduce how to build the differentially private SGD to train probabilistic NBM.

\item \emph{Monte Carlo Markov Chain (MCMC).} We estimate the predictive distribution of an unknown rating by a Monte Carlo approximation. In Section \ref{sec:dpps}, we will connect differential privacy to samples from the posterior $p(S|R^{>0}, \alpha_S,\alpha_R)$, via Stochastic Gradient Langevin Dynamics (SGLD) \cite{welling2011bayesian}.
\end{itemize}

\subsection{Differential Privacy}
\label{sec:dp}

Differential privacy \cite{dwork2014algorithmic}, which is a dominate security definition against inference attacks, aims to rigorously protect sensitive data in statistical databases. It allows to efficiently perform machine learning tasks with quantified privacy guarantee while accurately approximating the non-private results.
\begin{definition}{(Differential Privacy \cite{dwork2014algorithmic})} A random algorithm $\mathcal{M}$ is $(\epsilon, \sigma) \text{-}$differentially private if for all $\mathcal{O} \subset Range(\mathcal{M}) $ and for any of all $(\mathcal{D}_0, \mathcal{D}_1 )$ which only differs on one single record such that $||\mathcal{D}_0 - \mathcal{D}_1 || \leq 1$ satisfies
\begin{equation}\label{eq:dp}
Pr[\mathcal{M}(\mathcal{D}_0) \in \mathcal{O}] \leq exp(\epsilon)Pr[(\mathcal{M}(\mathcal{D}_1)\in \mathcal{O}]+\sigma \\ \nonumber
\end{equation}
And $\mathcal{M}$ guarantees $\epsilon \text{-}$differential privacy if $\sigma = 0$.
\end{definition}
The parameter $\epsilon$ states the difference of algorithm $\mathcal{M}$'s output for any $(\mathcal{D}_0, \mathcal{D}_1)$. It measures the privacy loss. Lower $\epsilon$ indicates stronger privacy protection.

\emph{Laplace Mechanism} \cite{dwork2006calibrating} is a common approach to approximate a real-valued function $f: \mathcal{D} \rightarrow \mathbb{R}$ with a differential privacy preservation using additive noise sampled from Laplace distribution:
$\mathcal{M}(\mathcal{D}) \overset{\Delta}{=} f(\mathcal{D}) + Lap(0,\frac{\Delta \mathcal{F}}{\epsilon})$,
where the $\Delta \mathcal{F}$ indicates the largest possible change between the outputs of the function $f$ which takes as input any neighbor databases $(\mathcal{D}_0, \mathcal{D}_1)$. It is referred to as the $L_1$-sensitivity which is defined as:
$\Delta \mathcal{F} = \underset{(\mathcal{D}_0, \mathcal{D}_1)}{max} || f(\mathcal{D}_0) - f(\mathcal{D}_1) ||_1$.

\emph{Sampling} from the posterior distribution of a Bayesian model with bounded log-likelihood, recently, has been proven to be differentially private \cite{wang2015privacy}. It is essentially an \emph{exponential mechanism} \cite{mcsherry2007mechanism}. Formally, suppose we have a dataset of $\mathcal{L}$ i.i.d examples $\mathcal{X} = \{x_i \}^\mathcal{L}_{i=1}$ which we model using a conditional probability distribution $p(x|\theta)$ where $\theta$ is a parameter vector, with a  prior distribution $p(\theta)$. If $p(x|\theta)$ satisfies $sup_{x \in \mathcal{X}, \theta \in \Theta}|\log p(x|\theta)| \leq B $, then releasing one sample from the posterior distribution $p(\theta| \mathcal{X})$ with any prior $p(\theta)$ preserves $4B\text{-}$differential privacy. Alternatively, $\epsilon$ differential privacy can be preserved by simply rescaling the log-posterior distribution by a factor of $\frac{\epsilon}{4B}$, under the regularity conditions where asymptotic normality (Bernstein-von Mises theorem) holds.

\section{Differentially Private SGD}
\label{sec:pgln}

When applying the differential privacy concept, treating the training model (process) as a black box, by only working on the original input or finally output, may result in very poor utility \cite{abadi2016deep,berlioz2015applying}. In contrast, by leveraging the tight characterization of training data, NBM and SGD, we directly calibrate noise into the SGD training process, via Laplace mechanism, to differential-privately learn \emph{similarity}. Algorithm \ref{alg:dpnbmiln} outlines our differentially-private SGD method for training probabilistic NBM.
\begin{algorithm}[h]
   \caption{Differentially Private SGD}
   \label{alg:dpnbmiln}
\begin{algorithmic}[1]
\Require Database $R^{>0}$, privacy parameter $\epsilon$, regular parameter $\lambda$, rescale parameter $\beta$, learning rate $\eta$, the total number of iterations $K$, initialized \emph{similarity} $S^{(1)}$.
\State $S^{(1)} = S^{(1)}\cdot \beta$ \Comment{ rescale the initialization}
\For {$t =1:K$}
\State $\bullet$ uniform-randomly sample a mini-batch $\Phi \subset R^{>0}$.
\State $\Delta \mathcal{F} = 2e_{max}\frac{\tau}{C}$ \Comment{ $e_{max} = 0.5+\frac{\varphi-1}{t+1}$; $|S_i|I_u \geq C$}
\State $e_{ui} = min(max(e_{ui},-e_{max}),e_{max})$ \Comment{$e_{ui} = \hat{r}_{ui}-r_{ui}$}
\State $\mathcal{G} = \sum_{(u,i) \in \Phi}e_{ui}\frac{\partial \hat{r}_{ui}}{\partial S_i} + Laplace(\frac{\gamma K \Delta \mathcal{F}}{ \epsilon })$ \Comment{$\gamma = \frac{L }{\mathcal{L}}$}
\State $S^{(t+1)} \leftarrow S^{(t)} - \eta (\beta \mathcal{G} + \lambda S^{(t)})$ \Comment{ up-scale the update}
\EndFor
\State \Return $S^{(t+1)}$
\end{algorithmic}
\end{algorithm}

According to Equation (\ref{myeq:logmpnbm2}) and (\ref{myeq:tgradient}), for each user $u$ (in a randomly chosen mini-batch $\Phi$) the gradient of \emph{similarity} is
\begin{equation}\label{myeq:gdetail}
\begin{split}
\mathcal{G}_{ij}(u) = e_{ui} \frac{\partial \hat{r}_{ui}}{\partial S_{ij}} = e_{ui}(\frac{r_{uj}}{S_iI_u^-}-\hat{r}_{ui}\frac{I_{uj}}{S_iI_u^-})
\end{split}
\end{equation}
where $e_{ui} =\hat{r}_{ui} - r_{ui}$. For the convenience of notation, we omit $S_{ij}<0$ part in Equation (\ref{myeq:gdetail}) which does not compromise the correctness of bound estimation.


To achieve differential privacy, we update the gradient $\mathcal{G}$ by adding Laplace noise (Algorithm \ref{alg:dpnbmiln}, line 6). The amount of noise is determined by the bound of gradient $\mathcal{G}_{ij}(u)$ (sensitivity $\Delta \mathcal{F}$) which further depends on $e_{ui}, (r_{uj}-\hat{r}_{ui}I_{uj})$ and $|S_i|I_u^-$. We reduce the sensitivity by exploiting the characteristics of training data, NBM and SGD respectively, by the following tricks.

\emph{Preprocessing} is often adopted in machine learning for utility reasons. In our case, it can contribute to privacy protection. For example, we only put users who have more than 20 ratings in the training data. It results in a bigger $|S_i|I_u^-$ thus will reduce sensitivity. Suppose the rating scale is $[r_{min}, r_{max}]$, removing ``paranoid" records makes $|r_{uj}-\hat{r}_{ui}I_{uj}| \leq \varphi$ hold, where $\varphi = r_{max} - r_{min}$.

\begin{figure}[h]
\centering
\includegraphics[height=1.8in, width=3.2in]{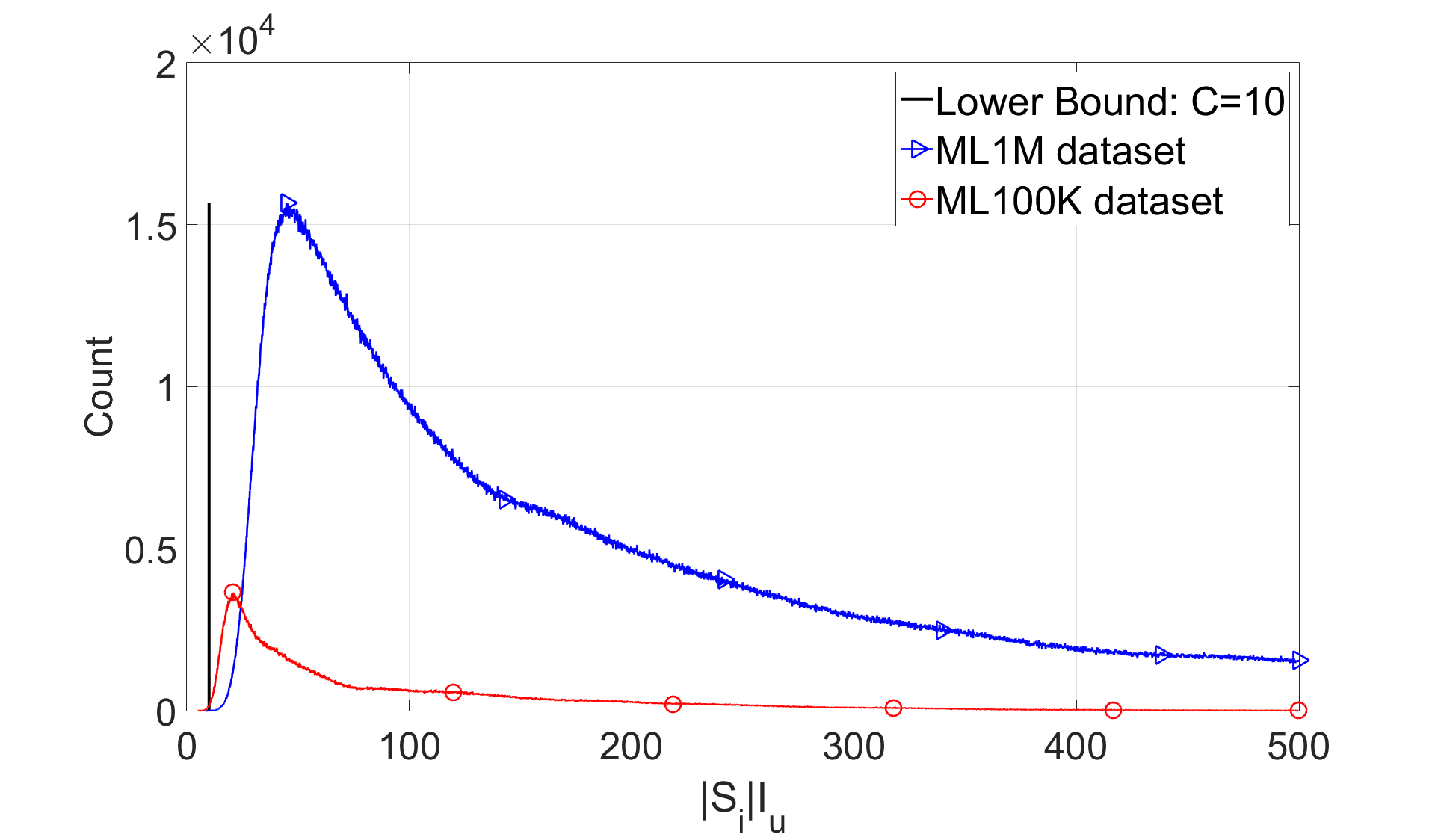}
\caption{ The distribution of $|S_i|I_u$ ($\beta = 10$).  In order to have more detail of the distribution of those points have low $|S_i|I_u$ values, the points $|S_i|I_u \geq 500$ are removed.}
\label{fig:simc}
\end{figure}

\emph{Rescaling the value of similarity} allows a lower sensitivity. NBM, see Equation (\ref{myeq:tnbm}), allows us to rescale the \emph{similarity} $S$ to an arbitrarily large magnitude such that we can further reduce the sensitivity ( by increasing the value of $|S_i|I_u$). However, the initialization of \emph{similarity} strongly influences the convergence of the training. Thus, it is important to balance the convergence (accuracy) and the value of \emph{similarity} (privacy). Another observation is that the gradient down-scales when enlarging the \emph{similarity}, see Equation (\ref{myeq:gdetail}). We can up-scale the gradient monotonically during the training process (Algorithm \ref{alg:dpnbmiln}, line 1 and 7). Fig. \ref{fig:simc} shows , let $\beta=10$, the lower bound of $|S_i|I_u$, denote as $C$, is 10.

\emph{The prediction error} $e_{ui} = \hat{r}_{ui}-r_{ui}$ decreases when the training goes to convergence such that we can clamp $e_{ui}$ to a lower bound dynamically. In our experiments, we bound the prediction error as $|e_{ui}| \leq 0.5 +\frac{\varphi -1 }{t+1}$,
where $t$ is the iteration index. This constraint trivially influences the convergence under non-private training process.

After applying all the tricks, we have the dynamic gradient bound at iteration $t$ as follows
\begin{equation}\label{eq:bound}
max(|\mathcal{G}^{(t)}|) \leq (0.5+\frac{\varphi-1}{t+1}) \frac{\varphi}{C}
\end{equation}
The \emph{sensitivity of} each iteration is
$\Delta \mathcal{F} = 2max(|\mathcal{G}^{(t)}|) \leq 2(0.5+\frac{\varphi-1}{t+1}) \frac{\varphi}{C}$.

\begin{theorem} Uniform-randomly sample $L$ examples from a dataset of the size $\mathcal{L}$, Algorithm \ref{alg:dpnbmiln} achieves $\epsilon\text{-}$differential privacy if in each SGD iteration  $t$ we set $\epsilon^{(t)} = \frac{\epsilon}{K\gamma}$ where $K$ is the number of iterations and $\gamma = \frac{L}{\mathcal{L}}$.
\end{theorem}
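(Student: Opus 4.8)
The plan is to assemble three standard pure-differential-privacy primitives on top of the per-iteration sensitivity bound already derived in Equation~(\ref{eq:bound}): the Laplace mechanism, privacy amplification by subsampling, and adaptive sequential composition, together with the post-processing invariance of differential privacy. Throughout, I take two datasets $\mathcal{D}_0,\mathcal{D}_1$ to be neighbours when they differ in a single rating, matching the claim that Algorithm~\ref{alg:dpnbmiln} is differentially private for each rating value, and I regard the released object as $S^{(K+1)}$.

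First I would analyse one iteration $t$ in isolation, conditioning on the sampled mini-batch $\Phi$. Changing a single rating perturbs only the affected term(s) of $\mathcal{G}=\sum_{(u,i)\in\Phi}e_{ui}\,\partial\hat r_{ui}/\partial S_i$; by the dynamic clamping $|e_{ui}|\le e_{max}$, the preprocessing bound $|r_{uj}-\hat r_{ui}I_{uj}|\le\varphi$, and the rescaling guarantee $|S_i|I_u^-\ge C$, the resulting $L_1$ change of $\mathcal{G}$ is at most $\Delta\mathcal{F}=2e_{max}\varphi/C$, uniformly over $t$ and over every state $S^{(t)}$ the algorithm can reach. Hence adding $\mathrm{Lap}(\Delta\mathcal{F}/\epsilon^{(t)})$ noise with $\epsilon^{(t)}=\epsilon/(K\gamma)$ --- which is exactly the scale $\gamma K\Delta\mathcal{F}/\epsilon$ appearing on line~6 --- makes the noisy gradient at iteration $t$, and therefore $S^{(t+1)}$ given $S^{(t)}$ (by post-processing), an $\epsilon^{(t)}$-differentially private function of the mini-batch $\Phi$.

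Next I would lift this from the subsample to the full training set. Since $\Phi$ is a uniform random draw of $L$ out of $\mathcal{L}$ ratings, the sampling rate is $\gamma=L/\mathcal{L}$; the differing rating lies in $\Phi$ only with probability $\gamma$, and the amplification-by-subsampling lemma then turns the $\epsilon^{(t)}$-DP per-iteration mechanism into one on the full dataset with privacy parameter $\ln\!\bigl(1+\gamma(e^{\epsilon^{(t)}}-1)\bigr)\approx\gamma\epsilon^{(t)}=\epsilon/K$ for small $\epsilon^{(t)}$. Finally, $S^{(K+1)}$ is a deterministic function of the public rescaled initialisation $S^{(1)}$ and the $K$ noisy gradients $\mathcal{G}^{(1)},\dots,\mathcal{G}^{(K)}$, so by post-processing it suffices to bound the privacy of this $K$-tuple; since each $\mathcal{G}^{(t)}$ is (essentially) $\epsilon/K$-DP and each iteration's mechanism depends on the previous outputs only through $S^{(t)}$, adaptive sequential composition for pure DP yields $\sum_{t=1}^{K}\epsilon/K=\epsilon$, which is the claim.

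The main obstacle is not the composition bookkeeping but the sensitivity step and its interface with subsampling. One must verify that a single-rating change really does leave $\Delta\mathcal{F}$ as in Equation~(\ref{eq:bound}) for every iteration and every reachable $S^{(t)}$ --- in particular that the invariant $|S_i|I_u^-\ge C$ is genuinely enforced by the rescaling parameter $\beta$ (argued empirically via Fig.~\ref{fig:simc}), that the error clamping, being a deterministic function of the data, is folded into the gradient before the noise is added so that the Laplace guarantee applies to the clamped quantity, and that the ``differs in one rating'' relation is transported correctly through the subsampling lemma and through the adaptively chosen mini-batches. The only other point needing care is pinning down the exact constant in the amplification bound, where the paper implicitly uses the first-order estimate $\ln\!\bigl(1+\gamma(e^{\epsilon^{(t)}}-1)\bigr)\approx\gamma\epsilon^{(t)}$; everything else is routine.
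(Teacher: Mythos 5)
Your proposal is correct and follows essentially the same route as the paper's proof: per-iteration Laplace mechanism calibrated to the sensitivity bound of Equation~(\ref{eq:bound}), privacy amplification by subsampling at rate $\gamma=L/\mathcal{L}$ to convert each iteration's $\epsilon/(K\gamma)$ guarantee into $\epsilon/K$ on the full dataset, and sequential composition over the $K$ iterations. You are in fact more careful than the paper, which invokes the amplification lemma as an exact $\gamma\epsilon$ factor and does not discuss post-processing or the adaptivity of the composition; your caveats about the first-order approximation $\ln\bigl(1+\gamma(e^{\epsilon^{(t)}}-1)\bigr)\approx\gamma\epsilon^{(t)}$ and about verifying the invariant $|S_i|I_u^-\ge C$ identify precisely the points the paper leaves implicit.
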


\begin{proof}
In Algorithm \ref{alg:dpnbmiln}, suppose the number of iterations $K$ is known in advance, and each SGD iteration maintains $\frac{\epsilon}{K\gamma}\text{-}$differential privacy. The privacy enhancing technique \cite{beimel2014bounds,kasiviswanathan2011can} indicates that given a method which is $\epsilon\text{-}$differentially private over a deterministic training set, then it maintains $\gamma \epsilon\text{-}$differential privacy with respect to a full database if we uniform-randomly sample training set from the database where $\gamma$ is the sampling ratio. Finally, combining the privacy enhancing technique with composition theory \cite{dwork2014algorithmic}, it ensures the $K$ iterations SGD process maintain the overall bound of $\epsilon\text{-}$differential privacy. \qed
\end{proof}

\section{Differentially Private Posterior Sampling}
\label{sec:dpps}

Sampling from the posterior distribution of a Bayesian model with bounded log-likelihood has free differential privacy to some extent \cite{wang2015privacy}. Specifically, for probabilistic NBM, releasing a sample of the \emph{similarity} $S$,
\begin{equation}\label{myeq:samples}
\begin{split}
S  \sim p(S|R^{>0}, \alpha_S, \alpha_R)
   \propto exp( \sum_{i=1}^M\sum_{u=1}^N(r_{ui}-\frac{S_iR_u^-}{|S_i|I_u^-})^2+\lambda\sum_{i=1}^M||S_i||_2)
\end{split}
\end{equation}
achieves $4B\text{-}$differential privacy at user level, if each user's log-likelihood is bounded to B, i.e. $\underset{u \in R^{>0}}{max} \sum_{i \in R_u} (\hat{r}_{ui}-r_{ui})^2 \leq B $. Wang et al. \cite{wang2015privacy} showed that we can achieve $\epsilon\text{-}$differential privacy by simply rescaling the log-posterior distribution with $\frac{\epsilon}{4B}$, i.e. $\frac{\epsilon}{4B} \cdot \log p(S|R^{>0}, \alpha_S,\alpha_R)$.

Posterior sampling is computationally costly. For the sake of efficiency, we adopt a recent introduced Monte Carlo method, Stochastic Gradient Langevin Dynamics (SGLD) \cite{welling2011bayesian}, as our MCMC sampler. To successfully use SGLD, we need to derive an unbiased estimator of \emph{similarity} gradient from a mini-batch which is a non-trivial task.

Next, we first overview the basic principles of SGLD (Section \ref{sec:sgld}), then we derive an unbiased estimator of the true \emph{similarity} gradient (Section \ref{sec:ssus}), and finally present our privacy-preserving algorithm (Section \ref{sec:subdpps}).

\subsection{Stochastic Gradient Langevin Dynamics}
\label{sec:sgld}

SGLD is an annealing of SGD and Langevin dynamics \cite{rossky1978brownian} which generates samples from a posterior distribution. Intuitively, it adds an amount of Gaussian noise calibrated by the step sizes (learning rate) used in the SGD process, and the step sizes are allowed to go to zero. When it is far away from the basin of convergence, the update is much larger than noise and it acts as a normal SGD process. The update decreases when the sampling approaches to the convergence basin such that the noise dominated, and it behaves like a Brownian motion. SGLD updates the candidate states according to the following rule.
\begin{equation}\label{eq:sgld}
\begin{split}
  \Delta \theta_t  = \frac{\eta_t}{2}(\Delta \log p(\theta_t)+\frac{\mathcal{L}}{L}\sum_{i=1}^{L}\Delta \log p(x_{ti}|\theta_t))+z_t ; \ \ \ \  z_t \sim \mathcal{N}(0,\eta_t)
\end{split}
\end{equation}
where $\eta_t$ is a sequence of step sizes.  $p(x|\theta)$ denotes conditional probability distribution, and $\theta$ is a parameter vector with a prior distribution $p(\theta)$. $L$ is the size of a mini-batch randomly sampled from dataset $\mathcal{X}^\mathcal{L}$.
To ensure convergence to a local optimum, the following requirements of step size $\eta_t$ have to be satisfied:

$$\sum_{t=1}^{\infty}\eta_t = \infty \quad \quad \sum_{t=1}^{\infty}\eta_{t}^2 < \infty$$
Decreasing step size $\eta_t$ reduces the discretization error such that the rejection rate approaches zero, thus we do not need accept-reject test. Following the previous works, e.g. \cite{liu2015fast,welling2011bayesian}, we set step size $\eta_t = \eta_1 t^{-\xi}$, commonly, $\xi \in [0.3, 1]$. In order to speed up the burn-in phase of SGLD, we multiply the step size $\eta_t$ by a temperature parameter $\varrho$ ($0<\varrho < 1$) where $\sqrt{\varrho \cdot \eta_t} \gg \eta_t$ \cite{chen2014stochastic}.

\subsection{Unbiased Estimator of The Gradient }
\label{sec:ssus}
The log-posterior distribution of \emph{similarity} $S$ has been defined in Equation (\ref{myeq:logmpnbm2}). The true gradient of the \emph{similarity} $S$ over $R^{>0}$ can be computed as
\begin{equation}\label{eq:gofs}
\mathcal{G}(R^{>0}) = \sum_{(u,i)\in R^{>0}}g_{ui}(S;R^{>0}) + \lambda S
\end{equation}
where $g_{ui}(S; R^{>0}) = e_{ui}\frac{\partial \hat{r}_{ui}}{\partial S_i}$.
To use SGLD and make it converge to true posterior distribution, we need an unbiased estimator of the true gradient which can be computed from a mini-batch $\Phi \subset R^{>0}$. Assume that the size of $\Phi$ and $R^{>0}$ are $L$ and $\mathcal{L}$ respectively. The stochastic approximation of the gradient is
\begin{equation}\label{eq:agofs}
\mathcal{G}(\Phi) = \mathcal{L}\bar{g}(S, \Phi) + \lambda S \circ \mathbb{I}[i,j \in \Phi]
\end{equation}
where $\bar{g}(S, \Phi) = \frac{1}{L}\sum_{(u,i)\in \Phi}g_{ui}(S,\Phi)$.  $\mathbb{I} \subset \mathbb{B}^{M \times M}$ is symmetric binary matrix, and  $\mathbb{I}[i,j \in \Phi] =1 $ if any item-pair $(i,j)$ exists in $\Phi$, otherwise 0. $\circ$ presents element-wise product (i.e. Hadamard product). The expectation of $\mathcal{G}(\Phi)$ over all possible mini-batches is,
\begin{equation}\label{eq:egofs}
\begin{split}
\mathbb{E}_{\Phi}  [\mathcal{G}(\Phi)] & =  \mathbb{E}_{\Phi}[\mathcal{L}\bar{g}(S, \Phi)]+ \lambda \mathbb{E}_{\Phi} [ S \circ \mathbb{I}[i,j \in \Phi]] \\
& =  \sum_{(u,i)\in R^{>0}}g_{ui}(S;R^{>0}) + \lambda \mathbb{E}_{\Phi} [ S \circ \mathbb{I}[i,j \in \Phi]]
\end{split}
\end{equation}
$\mathbb{E}_{\Phi} [\mathcal{G}(\Phi)]$ is not an unbiased estimator of the true gradient $\mathcal{G}(R^{>0})$ due to the prior term $\mathbb{E}_{\Phi} [ S \circ \mathbb{I}[i,j \in \Phi]]$. Let $\mathbb{H} = \mathbb{E}_{\Phi} [ \mathbb{I}[i,j \in \Phi]]$, we can remove this bias by multiplying the prior term with $\mathbb{H}^{-1}$ thus to obtain an unbiased estimator. Follow previous approach \cite{ahn2015large}, we assume the mini-batches are sampled with replacement, then $\mathbb{H}$ is,
\begin{equation}\label{eq:hij}
\mathbb{H}_{ij} =  1- \frac{|I_i||I_j|}{\mathcal{L}^2}(1-\frac{|I_j|}{\mathcal{L}})^{L -1}(1-\frac{|I_i|}{\mathcal{L}})^{L -1}
\end{equation}
where $|I_i|$ (resp. $|I_j|$) denotes the number of ratings of item $i$ (resp. $j$) in the complete dataset $R^{>0}$. Then the SGLD update rule is the following:
\begin{equation}\label{eq:upd}
S^{(t+1)} \leftarrow S^{(t)} - \frac{\eta_t}{2}(\mathcal{L}\bar{g}(S^{(t)}, \Phi) + \lambda S^{(t)} \circ \mathbb{H}^{-1})+z_t
\end{equation}

\subsection{Differential Privacy via Posterior Sampling}
\label{sec:subdpps}
To construct a differentially private NBM, we exploit a recent observation that sampling from scaled posterior distribution of a Bayesian model with bounded log-likelihood can achieve $\epsilon\text{-}$differential privacy \cite{wang2015privacy}. We summarize the differentially private sampling process (via SGLD) in Algorithm \ref{alg:dpnbmps}.
\begin{algorithm}[h]
   \caption{Differentially Private Posterior Sampling (via SGLD)}
   \label{alg:dpnbmps}
\begin{algorithmic}[1]
\Require Temperature parameter $\varrho$, privacy parameter $\epsilon$, regular parameter $\lambda$, initial learning rate $\eta_1$. Let $K$ larger than burn-in phase.
\For {$t = 1:K $}
\State  $\bullet$ Randomly sample a mini-batch $\Phi \subset R^{>0}$.
\State $\bar{g}(S^{(t)},\Phi) = \frac{1}{L}\sum_{(u,i) \in \Phi}e_{ui}\frac{\partial \hat{r}_{ui}}{\partial S^{(t)}_i}$ \Comment{gradient of $S$ (mini-batch)}
\State $z_t \sim \mathcal{N}(0, \varrho \cdot \eta_t)$ \Comment{ $\sqrt{\varrho \cdot \eta_t} \gg \eta_t$}
\State $S^{(t+1)} \leftarrow S^{(t)} - \frac{\epsilon}{4B} \cdot \frac{\eta_t}{2}(\mathcal{L}\bar{g}(S^{(t)},\Phi)+\lambda S^{(t)}\circ \mathbb{H}^{-1})+z_t$
\State $\eta_{t+1} = \frac{\eta_1}{t^{\gamma}}$
\EndFor
\State \Return $S^{(t+1)}$
\end{algorithmic}
\end{algorithm}

Now, a natural question is how to determine the log-likelihood bound $B$? ($\underset{u \in R^{>0}}{max} \sum_{i \in R_u} (\hat{r}_{ui}-r_{ui})^2 \leq B$, and see Equation (\ref{myeq:samples})). Obviously, $B$ depends on the max rating number per user. To those users who rated more than $\tau$ items, we randomly remove some ratings thus to ensure that each user at most has $\tau$ ratings. In our context, the rating scale is [1,5], let $\tau=200$, we have $B=(5-1)^2 \times 200$ (In reality, most users have less than 200 ratings \cite{liu2015fast}).

\begin{theorem}
\label{the:dpps}
Algorithm \ref{alg:dpnbmps} provides $(\epsilon, (1+e^{\epsilon})\delta)\text{-}$differential privacy guarantee to any user if the distribution $P_{\mathcal{X}}'$ where the approximate samples from is $\delta\text{-}$far away from the true posterior distribution $P_{\mathcal{X}}$, formally $||P_{\mathcal{X}}'-P_{\mathcal{X}} ||_1 \leq \delta $. And $\delta \rightarrow 0$ if the MCMC sampling asymptotically converges.
\end{theorem}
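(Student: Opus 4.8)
The plan is to decompose the claim into four pieces: (i) \emph{exact} sampling from the rescaled posterior (the distribution $P_{\mathcal{X}}$ whose log-density is $\tfrac{\epsilon}{4B}\log p(S\mid R^{>0},\alpha_S,\alpha_R)$) is $\epsilon$-differentially private at the user level; (ii) the SGLD iteration of Algorithm~\ref{alg:dpnbmps} is a consistent sampler for exactly this $P_{\mathcal{X}}$, so its finite-$K$ output law $P_{\mathcal{X}}'$ satisfies $\|P_{\mathcal{X}}'-P_{\mathcal{X}}\|_1\le\delta$ with $\delta=\delta(K)\to 0$; (iii) a generic ``robustness'' lemma: post-composing an $(\epsilon,0)$-DP mechanism with a $\delta$-perturbation of its output law yields $(\epsilon,(1+e^{\epsilon})\delta)$-DP; and (iv) combining these. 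Steps (i) and (iii) are elementary and carry the main logical weight, while step (ii) is where the construction of Section~\ref{sec:ssus} and the step-size conditions of Section~\ref{sec:sgld} enter.

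For step (i) I would invoke the result recalled in Section~\ref{sec:dp} \cite{wang2015privacy}: if every datum's log-likelihood is bounded in absolute value by $B$, releasing one posterior sample is $4B$-DP, and rescaling the log-posterior by $\tfrac{\epsilon}{4B}$ brings this down to $\epsilon$-DP, this being an instance of the exponential mechanism. The only thing to verify is that the ``datum'' here is an entire user profile and that the per-user log-likelihood is bounded as claimed: after the preprocessing that caps each user at $\tau$ ratings, $\max_{u\in R^{>0}}\sum_{i\in R_u}(\hat r_{ui}-r_{ui})^2\le (r_{max}-r_{min})^2\,\tau=B$, which is exactly the value fixed just before the theorem. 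Hence the idealized algorithm that could sample $S\sim P_{\mathcal{X}}$ directly is $\epsilon$-DP with respect to changing any single user.

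For step (iii), fix neighboring databases $\mathcal{D}_0,\mathcal{D}_1$ and write $P_b,P_b'$ for the exact resp.\ approximate output distributions on $\mathcal{D}_b$. For any measurable $\mathcal{O}\subseteq Range(\mathcal{M})$ we have $|P_b'(\mathcal{O})-P_b(\mathcal{O})|\le\|P_b'-P_b\|_1\le\delta$, and by step (i) $P_0(\mathcal{O})\le e^{\epsilon}P_1(\mathcal{O})$. Chaining,
\[
P_0'(\mathcal{O})\;\le\;P_0(\mathcal{O})+\delta\;\le\;e^{\epsilon}P_1(\mathcal{O})+\delta\;\le\;e^{\epsilon}\bigl(P_1'(\mathcal{O})+\delta\bigr)+\delta\;=\;e^{\epsilon}P_1'(\mathcal{O})+(1+e^{\epsilon})\delta,
\]
which is precisely the $(\epsilon,(1+e^{\epsilon})\delta)$-DP guarantee. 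Step (ii) then supplies the value of $\delta$: because the bias-corrected gradient $\mathcal{L}\bar g(S,\Phi)+\lambda S\circ\mathbb{H}^{-1}$ is an unbiased estimator of the true gradient (Equation~(\ref{eq:egofs}) with the $\mathbb{H}^{-1}$ correction) and the schedule $\eta_t=\eta_1 t^{-\gamma}$ can be taken to satisfy $\sum_t\eta_t=\infty$, $\sum_t\eta_t^2<\infty$, standard SGLD consistency results make the law of $S^{(t)}$ converge to $P_{\mathcal{X}}$; thus $\delta=\delta(K)\to 0$ as $K\to\infty$, recovering pure $\epsilon$-DP in the limit.

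I expect the main obstacle to be step (ii) rather than the DP bookkeeping. One has to argue carefully that Algorithm~\ref{alg:dpnbmps} (with the $\tfrac{\epsilon}{4B}$ factor multiplying the drift and Gaussian noise $z_t\sim\mathcal{N}(0,\varrho\eta_t)$) targets the \emph{rescaled} posterior, so that the relevant privacy constant is $\epsilon$ and not $4B$; and the NBM objective behind Equation~(\ref{eq:gofs}) is non-convex, so an off-the-shelf SGLD convergence theorem only yields $\|P_{\mathcal{X}}'-P_{\mathcal{X}}\|_1\to 0$ under additional ergodicity/regularity assumptions. The cleanest path is to keep this convergence as the stated hypothesis of the theorem (which is exactly how it is phrased, through the free parameter $\delta$) and let the elementary steps (i) and (iii) deliver the $(\epsilon,(1+e^{\epsilon})\delta)$ conclusion.
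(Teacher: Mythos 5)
Your proposal is correct and follows essentially the same route as the paper: exact sampling from the $\tfrac{\epsilon}{4B}$-rescaled posterior is an exponential mechanism with per-user utility bounded by $B$ (hence sensitivity at most $2B$ and $\epsilon$-differential privacy at the user level), and the degradation to $(\epsilon,(1+e^{\epsilon})\delta)$ comes from the $\delta$-gap between the approximate sampling distribution and the true posterior, with $\delta\to 0$ under asymptotic convergence of the MCMC. The only difference is that you explicitly prove the chaining inequality $P_0'(\mathcal{O})\le e^{\epsilon}P_1'(\mathcal{O})+(1+e^{\epsilon})\delta$, which the paper simply cites from Wang et al.; your bookkeeping there is correct.
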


\begin{proof}
Essentially, differential privacy via posterior sampling \cite{wang2015privacy} is an exponential mechanism \cite{mcsherry2007mechanism} which protects $\epsilon\text{-}$differential privacy when releasing a sample $\theta$ with probability proportional to $exp(-\frac{\epsilon}{2\Delta \mathcal{F}}p(\mathcal{X}|\theta))$, where $p(\mathcal{X}|\theta)$ serves as the utility function. If $p(\mathcal{X}|\theta)$ is bounded to $B$, we  have the sensitivity $\Delta \mathcal{F} \leq 2B$. Thus, release a sample by Algorithm \ref{alg:dpnbmps} preserves $\epsilon\text{-}$differential privacy.  It compromises the privacy guarantee to $(\epsilon, (1+e^{\epsilon})\delta)$ if the distribution (where the sample from) is $\delta\text{-}$far away from the true posterior distribution, proved by \cite{wang2015privacy}. \qed
\end{proof}

Note that when $\epsilon = 4B$, the differentially private sampling process is identical to the non-private sampling. This is also the meaning of \emph{some extent of free privacy}. It starts to lose accuracy when $\epsilon < 4B$. One concern of this sampling approach is the distance $\delta$ between the distribution where the samples from and the true posterior distribution, which compromises the differential privacy guarantee. Fortunately, an emerging line of works, such as \cite{sato2014approximation,vollmer2015non}, proved that SGLD can converge in finite iterations. As such we can have arbitrarily small $\delta$ with a (large) number of iterations.

\section{Experiments and Evaluation}
\label{sec:exp}

We test the proposed solutions on two real world datasets, ML100K and ML1M \cite{MovieLens}, which are widely employed for evaluating recommender systems. ML100K dataset has 100K ratings that 943 users assigned to 1682 movies. ML1M dataset contains 1 million ratings that 6040 users gave to 3952 movies. In the experiments, we adopt 5-fold cross validation for training and evaluation. We use root mean square error (RMSE) to measure accuracy performance:
\begin{displaymath}
RMSE = \sqrt{\frac{\sum_{(u,i) \in R^T}(r_{ui}-\hat{r}_{ui})^2}{|R^T|}}
\end{displaymath}
where $|R^T|$ is the total number of ratings in the test set $R^T$. The lower the RMSE value the higher the accuracy. As a result of cross validation, the RMSE value reported in the following figures is the mean value of multiple runs.

\subsection{Experiments Setup}

In the following, the differentially-private SGD based PNBM is referred to as DPSGD-PNBM, and the differentially-private posterior sampling PNBM is referred as DPPS-PNBM. The experiment source code is available at Github\footnote{https://github.com/lux-jwang/Experiments/tree/master/dpnbm}.

We compare their performances with the following (state-of-the-art) baseline algorithms.

\begin{itemize}
\item \emph{{non-private PCC and COS}: } There exist differentially-private NBMs based on Pearson correlation (PCC) or Cosine similarity (COS) NBMs (e.g. \cite{mcsherry2009differentially,zhu2014effective,guerraoui2015d}). Since their accuracy is worse than the non-private algorithms, we directly focus on these non-private ones.
\vspace{0.1cm}
\item \emph{{DPSGD-MF}: } Differentially private matrix factorization from \cite{berlioz2015applying}, which calibrates Laplacian noise into the SGD training process.
\vspace{0.1cm}
\item \emph{{DPPS-MF}: }  Differentially private matrix factorization from \cite{liu2015fast}, which exploits the posterior sampling technique.
\end{itemize}

We empirically choose the optimal parameters for each model using a heuristic grid search method. We summarize them as follows.

\begin{itemize}
\item \emph{{DPSGD-PNBM}: } The learning rate $\eta$ is searched in $\{ 0.1, 0.4 \}$, and the iteration number $K \in [1, 20]$, the regular parameter $\lambda \in \{0.05, 0.005 \}$, the rescale parameter $\beta \in \{10,20 \}$. The neighbor size $N_k = 500$, the lower bound of $|S_i|I_u:\  C \in \{ 10, 15 \}$. In the training process, we decrease $K$ and increase $\{ \eta, C \}$ when requiring a stronger privacy guarantee (a smaller $\epsilon$).
\vspace{0.1cm}
\item \emph{{DPPS-PNBM}: } The initial learning rate $\eta_1 \in \{ 8\cdot 10^{-8}, 4\cdot 10^{-7},  8\cdot 10^{-6}\}$, $\lambda \in \{0.02, 0.002 \}$, the temperature parameter $\varrho = \{0.001, 0.006, 0.09\}$,  the decay parameter $\xi = 0.3$. $N_k = 500$.
\vspace{0.1cm}
\item \emph{{DPSGD-MF}: } $\eta \in \{ 6\cdot 10^{-4}, 8\cdot 10^{-4}\}$, $K \in [10, 50]$ (the smaller privacy loss $\epsilon$ the less iterations), $\lambda \in \{ 0.2, 0.02 \}$, the latent feature dimension $d \in \{10, 15, 20\}$.
\vspace{0.1cm}
\item \emph{{DPPS-MF}: }  $\eta \in \{ 2 \cdot 10^{-9}, 2 \cdot 10^{-8}, 8 \cdot 10^{-7}, 8 \cdot 10^{-6}\}$, $\lambda \in \{ 0.02, 0.05, 0.1, 0.2\}$, $\varrho = \{1\cdot 10^{-4}, 6\cdot 10^{-4}, 4\cdot 10^{-3}, 3\cdot 10^{-2} \}$, $d \in \{10, 15, 20\}$,$\xi = 0.3$.
\vspace{0.1cm}
\item \emph{{non-private PCC and COS}: } For ML100K, we set $N_K=900$. For ML1M, we set $N_K=1300$.
\end{itemize}

\subsection{Comparison Results}
\label{sec:acca}

We first compare the accuracy between DPSGD-PNBM, DPSGD-MF, non-private PCC and COS and show the results in Fig. \ref{fig:dpsgd} for the two datasets respectively.  When $\epsilon \geq 20$, DPSGD-MF does not lose much accuracy, and it is better than non-private PCC and COS. However, the accuracy drops quickly (or, the RMSE increase quickly) when the privacy loss $\epsilon$ is reduced. This matches the observation in \cite{berlioz2015applying}. In the contrast, DPSGD-PNBM maintains a promising accuracy when $\epsilon \geq 1$, and is better than non-private PCC and COS.
\begin{figure}[h!]
\centering
\includegraphics[height=2in, width=5in]{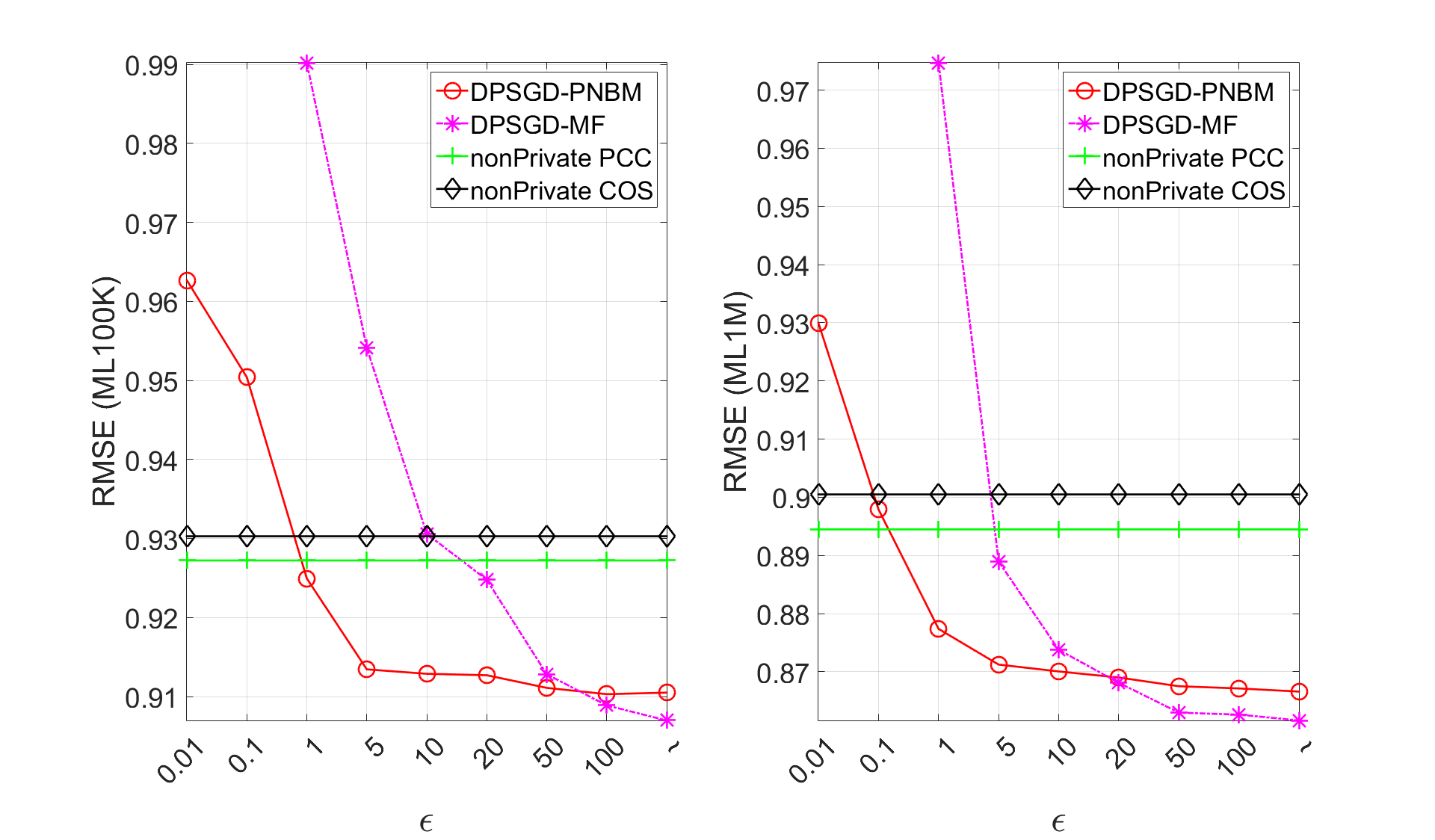}
\caption{Accuracy Comparison: DPSGD-PNBM, DPSGD-MF, non-private PCC, COS. }
\label{fig:dpsgd}
\end{figure}

DPPS-PNBM and DPPS-MF preserve differential privacy at user level. We denote the privacy loss $\epsilon$ in form of $x \times \tau$ where $x$ is a float value which indicates the average privacy loss at a rating level, and $\tau$ is the max rate number per user. The comparison is shown in Fig. \ref{fig:dpps}. In our context, for both datasets, $\tau = 200$. Both DPPS-PNBM and DPPS-MF allow accurate estimations when $\epsilon \geq 0.1 \times 200$. It may seem that $\epsilon=20$ is a meaningless privacy guarantee. We remark that the average privacy of a rating level is 0.1. Besides the accuracy performance is better than the non-private PCC and COS, from the point of privacy loss ratio, our models match previous works \cite{liu2015fast,mcsherry2009differentially}, where the authors showed that differentially private systems may not lose much accuracy when $\epsilon > 1$.
\begin{figure}[h!]
\centering
\includegraphics[height=2in, width=5in]{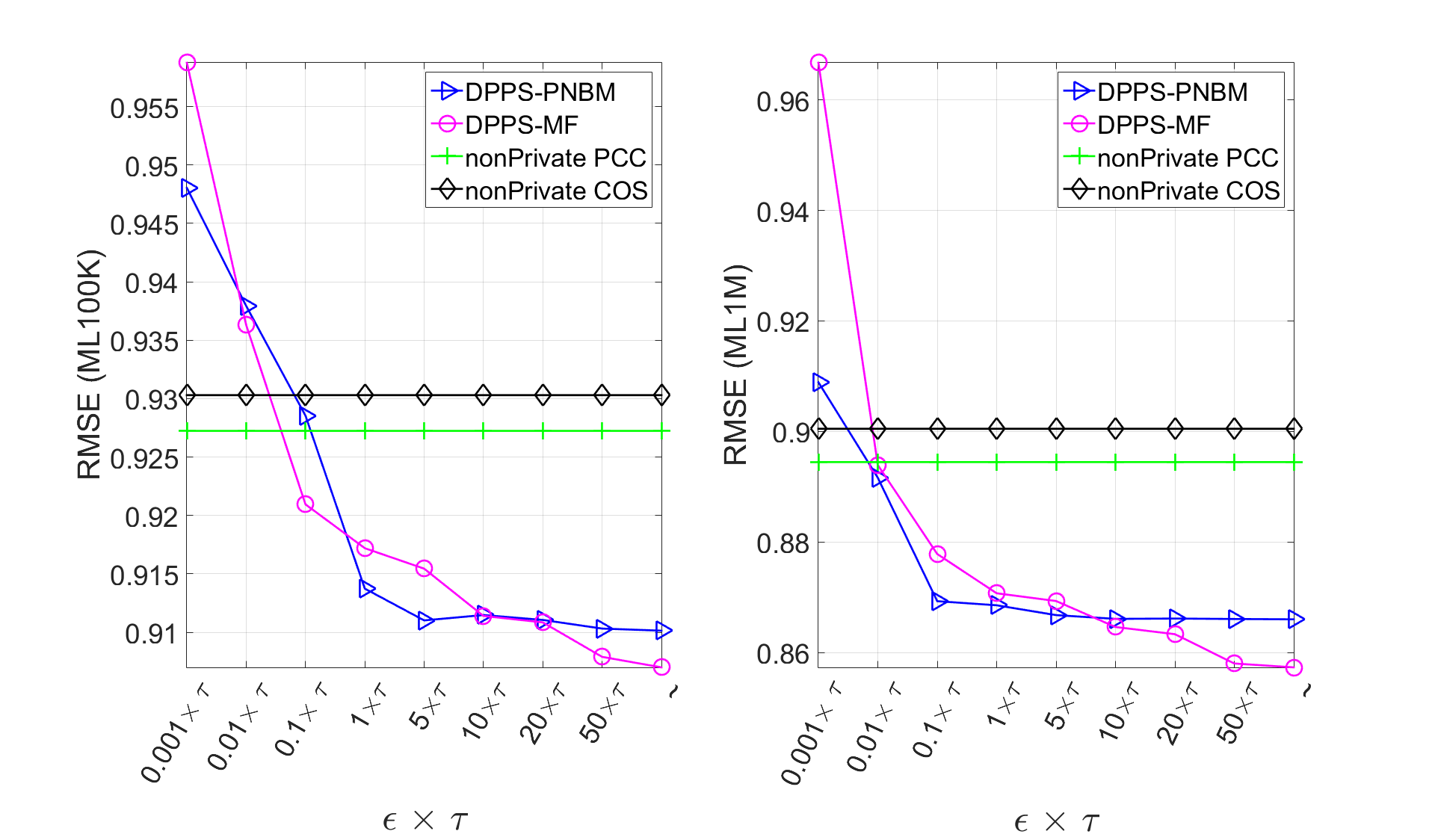}
\caption{Accuracy Comparison: DPPS-PNBM, DPPS-MF, non-private PCC, COS. }
\label{fig:dpps}
\end{figure}

For bandwidth and efficiency reason, mobile service providers may prefer to store the trained model (e.g. item \emph{similarity}) in mobile devices directly. Commercial recommender systems often have very large \emph{similarity} matrix such that the shortage of memory space in mobile devices may become a bottleneck. In order to alleviate this issue, we choose the $Top \text{-}N$ most similar neighbors only by \emph{similarity} matrix, by removing the rest neighbors of each item, such that we can sparsely store the matrix in practice. We compare accuracy with different number of neighbors with $\epsilon=1$, and summarize the results in Fig. \ref{fig:nk}. We stress two observations. Both DPSGD-PNBM and DPPS-PNBM reach their best accuracy with a smaller neighbor size. The accuracy of both DPSGD-PNBM and DPPS-PNBM is less sensitive than PCC and COS, when neighbor size is changed. This helps mitigate over-fitting problem and enhance system robustness.
\begin{figure}[h!]
\centering
\includegraphics[height=2in, width=5in]{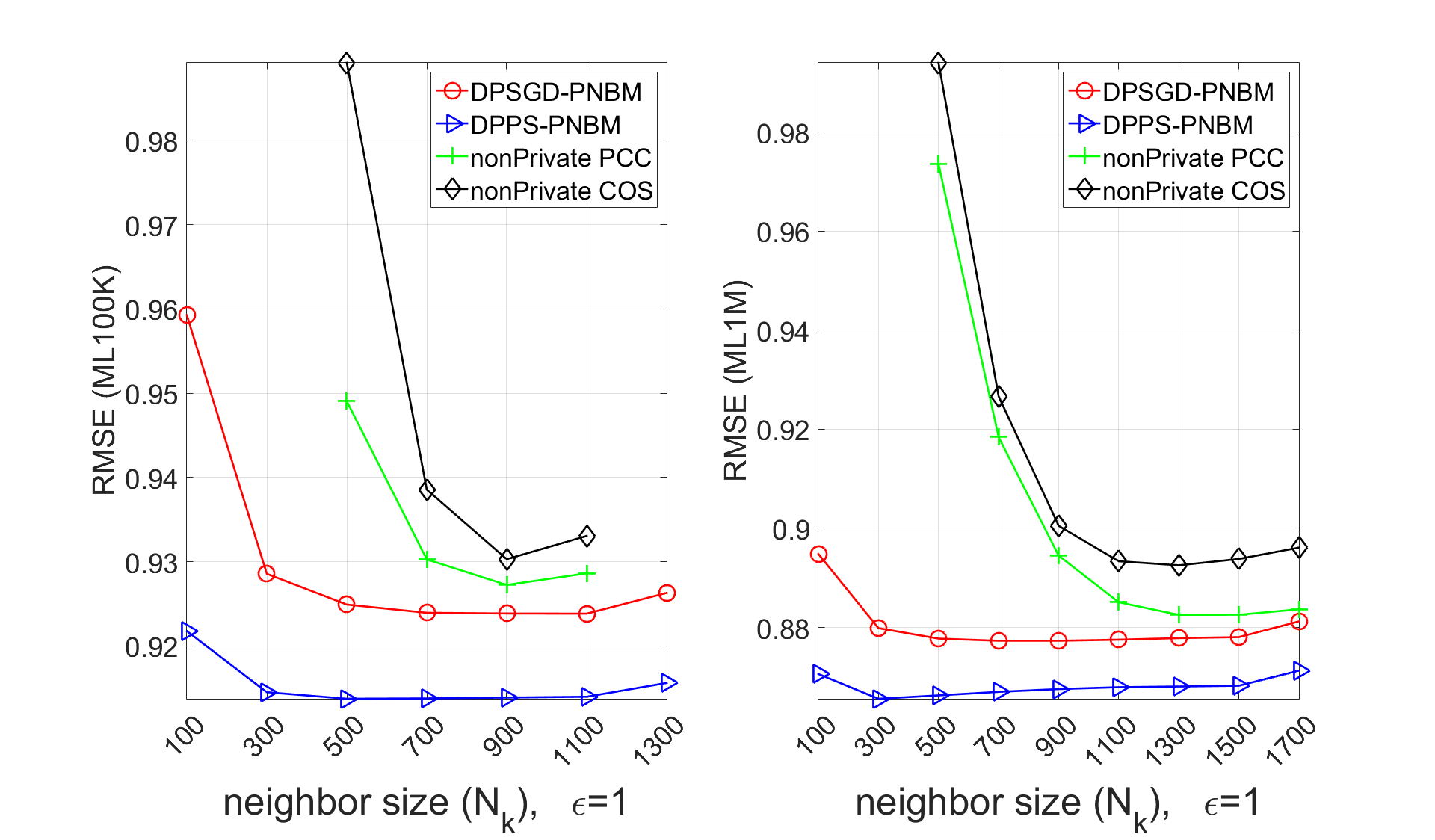}
\caption{Accuracy comparison with different neighbor sizes}
\label{fig:nk}
\end{figure}

DPSGD-PNBM and DPPS-PNBM achieve differential privacy at rating level (a single rating) and user level (a whole user profile) respectively. Below, we try to compare them at rating level, precisely at the average rating level for DPPS-PNBM. Fig. \ref{fig:sgdps} shows that both solutions can obtain quite accurate predictions with a privacy guarantee ($\epsilon \approx  1$). With the same privacy guarantee, DPPS-PNBM seems to be more accurate. However, DPPS-PNBM has its potential drawback. Recall from Section \ref{sec:dpps}, the difference $\delta$ between the distribution where samples from and the true posterior distribution compromises differential privacy guarantee. In order to have an arbitrarily small $\delta$, DPPS-PNBM requires a large number of iterations \cite{sato2014approximation,vollmer2015non}. At this point, it is less efficient than DPSGD-PNBM. In our comparison, we assume $\delta \rightarrow 0$.

\begin{figure}[h]
\centering
\includegraphics[height=2in, width=4.2in]{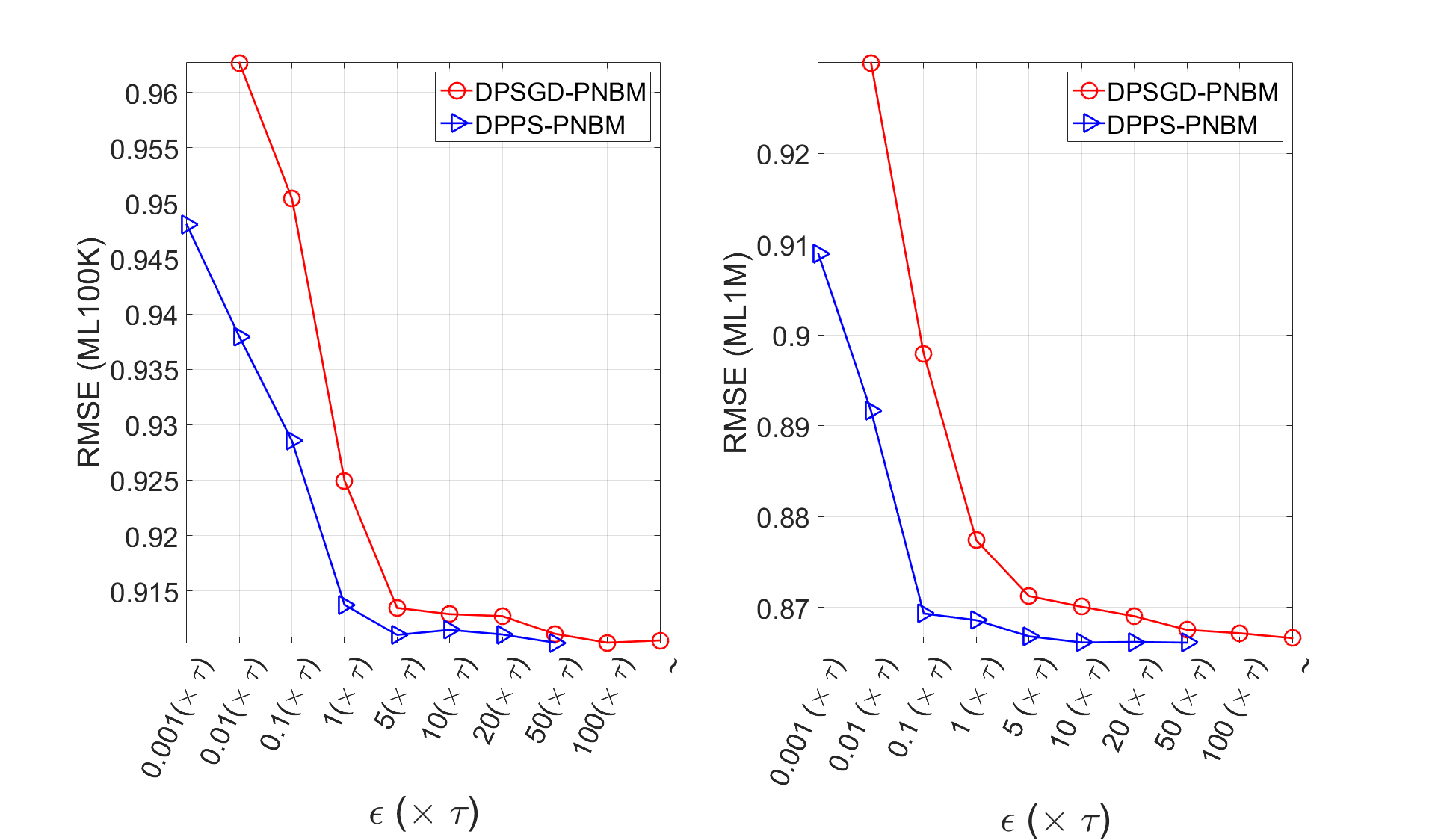}
\caption{Accuracy comparison between DPSGD-PNBM and DPPS-PNBM}
\label{fig:sgdps}
\end{figure}

\subsection{Summary}

In summary, DPSGD-MF and DPPS-MF are more accurate when privacy loss is large (e.g. in a non-private case). DPSGD-PNBM and DPPS-PNBM are better when we want to reduce the privacy loss to a meaningful range. Both our models consistently outperform non-private traditional NBMs, with a meaningful differential privacy guarantee. Note that \emph{similarity} is independent of NBM itself, thus other neighborhood-based recommenders can use our models to differential-privately learn \emph{Similarity}, and deploy it to their existing systems without requiring extra effort.

\section{Related Work}
\label{sec:related}

A number of works have demonstrated that an attacker can infer the user sensitive information, such as gender and politic view, from public recommendation results without using much background knowledge \cite{calandrino2011you,friedman2015privacy,narayanan2008robust,weinsberg2012blurme}.

Randomized data perturbation is one of earliest approaches to prevent user data from inference attack in which people either add random noise to their profiles or substitute some randomly chosen ratings with real ones (e.g.\cite{polat2003privacy,polat2005privacy,polat2006achieving}). While this approach is very simple, it does not offer rigorous privacy guarantee. Differential privacy \cite{dwork2014algorithmic} aims to precisely protect user privacy in statistical databases, and the concept has become very popular recently. \cite{mcsherry2009differentially} is the first work to apply differential privacy to recommender systems, and it has considered both neighborhood-based methods (using correlation as \emph{similarity}) and latent factor model (e.g. SVD). \cite{zhu2014effective} introduced a differentially private neighbor selection scheme by injecting Laplace noise to the \emph{similarity} matrix. \cite{guerraoui2015d} presented a scheme to obfuscate user profiles that preserves differential privacy. \cite{berlioz2015applying,liu2015fast} applied differential privacy to matrix factorization, and we have compared our solutions to theirs in Section \ref{sec:exp}.

Secure multiparty computation (SMC) recommender systems allow users to compute recommendation results without revealing their inputs to other parties. Many protocols have been proposed in the literature, e.g.  \cite{canny2002collaborative,tang2015privacy,nikolaenko2013privacy}. Unfortunately, these protocols do not prevent information leakage from the recommendation results.

\section{Conclusion}
\label{sec:con}

In this paper, we have proposed two different differentially private NBMs,  under a probabilistic framework. We firstly introduced a way to differential-privately find the maximum a posteriori \emph{similarity} by calibrating noise to the SGD training process. Then we built differentially private NBM by exploiting the fact that sampling from scaled posterior distribution can result in differentially private systems. While the experiment results have demonstrated that our models allow promising accuracy with a modest privacy budget in some well-known datasets, we consider it as an interesting future work to test the performances in other real world datasets.

\section*{Acknowledgments}
Both authors are supported by a CORE (junior track) grant from the National Research Fund, Luxembourg.

\bibliographystyle{abbrv}
\bibliography{sigproc}
\end{document}